%% file: main.tex
\input{preamble}

\begin{document}
\maketitle
\begin{center}
\small Preprint
\end{center}

\input{body}

\input{bibliography}
\end{document}

%% file: preamble.tex
\documentclass[12pt,a4paper]{article}
\usepackage{amsmath,amssymb,amsthm}
\usepackage[colorlinks=true,linkcolor=black,citecolor=black,
            urlcolor=black]{hyperref}
\usepackage{microtype}
\usepackage[margin=2.5cm]{geometry}
\usepackage{orcidlink}
\newtheorem{theorem}{Theorem}
\newtheorem{lemma}[theorem]{Lemma}

\theoremstyle{definition}
\newtheorem{definition}[theorem]{Definition}
\newtheorem{remark}[theorem]{Remark}

\newcommand{\OI}{\mathsf{OI}}
\newcommand{\CSC}{\mathsf{T_{CSC}}}
\newcommand{\Nat}{\mathbb{N}}

\newcommand{\PeqNP}{\mathsf{P} = \mathsf{NP}}
\newcommand{\SAT}{\mathsf{SAT}}
\newcommand{\nab}{\nabla}
\title{\textbf{Syntactic Systems Cannot See Semantic Invariants}}
\author{
  Fabio F.G. Buono\,\orcidlink{0009-0004-9199-2793}\\
  \small Independent researcher\\
}
\date{\today}

%% file: body.tex
\begin{abstract}
  We start from a small open question,
  where Hetzl and Vierling~\cite{HetzlVierling2020} asked whether two theories of
  induction, open induction and clause set cycles, are
  incomparable. They proved one direction and left the other open.
  Here we close it, and the proof is almost embarrassingly
  short, because the rules for addition can only fire when the
  first argument is $0$ or a successor, a Skolem constant is neither, so the terms $a{+}b$ and $b{+}a$ can never be touched, and a machine
  that can never touch them can never prove they are equal. The thing that separates the two theories is the order of two
  constants, and that order is a fact about numbers, not about
  symbols.
  
  We extract from this proof a small general principle, the Syntactic Invariance Principle, that names the shape of such
  arguments. We then close with a few speculative remarks on
  how this same shape appears, informally, in the known barriers
  to settling $\mathsf{P}$ versus $\mathsf{NP}$, where each
  barrier seems to point to a level of description that the techniques in the barrier cannot reach. We raise this as a
  suggestion rather than a theorem, since the analogy is real
  but we do not push it past the point where we can defend it.
  Along the way we raise an open question that the analogy suggests but does not settle, 
  on whether a fast algorithm for $\SAT$, were it to exist, would always 
  be exhibitable as a machine you can write down or whether it could be found, 
  in some cases, only as a function on the numbers.
  \end{abstract}
  
  \section{The central idea}
  
  Take the binary sequence $101000001$, whose value in base
  two is $2^8 + 2^6 + 2^0 = 321$. Represent 321 in a mixed base by dividing successively by the
  elements of an array $\nab = [200, 50, 10, 1]$, recording
  quotients and remainders at each step. The symbolic representation changes while the numerical value
  stays fixed.
  
  A system that operates on symbols sees the syntax change
  while the invariance of the numerical value stays hidden from
  it. The arXiv paper~\cite{Buono2012} turned this gap into
  a cipher. You encode a sequence through a secret mixed-base array $\nab$.
  What tells a right decoding from a wrong one is the global
  numerical value, a fact that sits above the arrangement of
  the symbols. Use the wrong $\nab$ and you get something that reads fine and
  means the wrong number. Without $\nab$, every test on the symbols misses it.
  
  Keep this picture in mind, because it is more than a metaphor for
  what follows but it is the same thing happening, literally.
  The rewriting system we study below cannot prove that
  the addition commutes for the same reason the cipher works, since
  what it needs to see lives in the numbers and it only ever
  touches symbols.
  
  \begin{remark}
  When a proof is elementary, the difficulty was in orientation,
  not in the problem. The conjecture of~\cite{HetzlVierling2020} was open not because the
  argument is deep but because commutativity of addition looked
  too small to be the separating witness. The right level of simplicity for a proof is the one that
  matches the actual depth of the result.
  \end{remark}
  
  \section{Setup}
  
  We work in $\mathcal{L} = \{0, s, +\}$ with:
  \begin{align}
    0 + x    &= x        \tag{A1}\\
    s(x) + y &= s(x + y) \tag{A2}
  \end{align}
  
  \begin{definition}
  Open induction ($\OI$, Shepherdson~\cite{Shepherdson1964}): for every
  quantifier-free $\varphi(x,\bar{y})$,
  \[
    \bigl[\,\varphi(0,\bar{y})
    \;\wedge\;
    \forall x\,(\varphi(x,\bar{y}) \to \varphi(s(x),\bar{y}))
    \,\bigr]
    \longrightarrow
    \forall x\;\varphi(x,\bar{y}).
  \]
  \end{definition}
  
  \begin{definition}[{\cite[Def.~2.1]{HetzlVierling2020}}]
  A clause set $S$ is \emph{refutable by a clause set cycle}
  if there exist clause sets $C_1, C_2$ such that $C_2$ follows
  from $C_1$ by superposition, $C_1$ follows from $C_2 \cup S$
  by superposition, and the cycle produces the empty clause.
  $\CSC$ is the theory of all universally closed clause sets
  refutable in this way.
  \end{definition}
  
  Clause set cycles abstract the cycle-detection methods used
  by automated inductive theorem provers, including the
  $n$-clause calculus of Kersani and Peltier~\cite{KersaniPeltier2013}.
  Every non-trivial program contains loops or recursions, so
  formal verification requires inductive reasoning, and these
  methods are the practical tools for it.
  
  Hetzl and Vierling~\cite{HetzlVierling2020} proved that
  $\CSC \subseteq I\exists_1$ (their Theorem~2.8)
  and that $\CSC \not\subseteq \OI$ (their Theorem~4.3, with
  the triangular numbers as witness).
  Hetzl and Weiser~\cite{HetzlWeiser2025} recently obtained a complete
  picture of the relationships between subsystems of $\OI$,
  placing clause set cycles among them as a restricted,
  parameter-free form of clause induction.
  The conjecture of~\cite{HetzlVierling2020} reduces to proving
  $\OI \not\subseteq \CSC$.
  
  \subsection{Position in the hierarchy}
  
  The standard fragments of arithmetic form a chain
  \[
    Q \;\subset\; \mathsf{PA}^- \;\subset\; \OI
    \;\subset\; I\exists_1 \;\subset\; I\Sigma_1
    \;\subset\; \mathsf{PA},
  \]
  where $Q$ is Robinson arithmetic and the later systems add
  the ordered-semiring axioms and then induction for
  increasingly complex formula
  classes~\cite{Shepherdson1964,Mohsenipour10}.
  Commutativity $W$ holds throughout the chain from
  $\mathsf{PA}^-$ upward, and $\OI$ proves it as an open
  identity.
  $Q$ does not prove $W$, since it has only the recursion
  equations for $+$ and $\cdot$ with no induction, so
  commutativity of addition is not among its theorems.
  
  Here is where the difficulty actually is, because the upper bound $\CSC \subseteq I\exists_1$ tells us nothing
  about whether $\CSC \vdash W$, given that $I\exists_1$ proves
  $W$ anyway. What matters is the lower end, the fact that $\CSC$ does not
  extend $Q$ as a theory of arithmetic. Clause set cycles characterize what a fixed background of
  clauses can refute by a cycle, not a theory that contains
  the ordered-semiring axioms as theorems. Against the bare background $\{$A1, A2$\}$,
  with no induction to lift them, the defining equations cannot reorder two
  opaque constants. The separation is where $W$ is above $Q$ in the hierarchy and
  is provable at $\OI$, yet the cycle mechanism over $\{$A1, A2$\}$ has no access to the induction
  that places it there. And $\OI$ itself is weak, which is worth remembering.
  By Shepherdson~\cite{Shepherdson1964} it has recursive nonstandard
  models, and it cannot prove many true arithmetical statements, not even that $\sqrt{2}$ is
  irrational~\cite{Mohsenipour10}. It proves $W$ not because it is strong but because $W$ is an
  open identity, exactly the kind of statement open induction reaches.
  The witness is elementary, which is why the separation from $\CSC$ is visible without heavier machinery, since what
  $\CSC$ lacks against the background $\{$A1, A2$\}$ is not strength but the induction that turns those equations into
  the identity.
  
  \section{Frozen subterms}\label{sec:frozen}
  
  Superposition rewrites a subterm $u$ by finding a rule
  $\ell \to r$ and a substitution $\sigma$ with
  $\sigma(\ell) = u$, which is syntactic
  unification~\cite{Robinson1965} and runs in linear
  time~\cite{Montanari1976}.
  A \emph{first-symbol clash} occurs when two terms have
  different outermost function symbols and neither is a
  variable. No substitution unifies them~\cite{Robinson1965}, because a
  substitution only replaces the variables, so if neither term is a
  variable their outermost symbols must already match.
  
  The left-hand sides of (A1) and (A2) are $0 + x$ and
  $s(x) + y$, requiring the first argument of $+$ to be $0$
  or $s(\cdot)$.
  
  Let $a, b$ be Skolem constants, fresh symbols distinct from
  $0$ and every term $s(t)$.
  The problems $0 \stackrel{?}{=} a$ and
  $s(x) \stackrel{?}{=} a$ are first-symbol clashes that no
  substitution solves, and the same holds for $b$.
  
  No rule in $\{$A1, A2$\}$ fires at the root of $(a{+}b)$
  or $(b{+}a)$. Those subterms are \emph{frozen}, and no step in any
  derivation of any length rewrites inside them.
  
  The relative order of $a$ and $b$ inside any compound term
  is a global invariant of the entire derivation.
  No sequence of steps, of any length, can swap them.
  The subterms $a{+}b$ and $b{+}a$ lie in permanently separate
  regions of the derivation space.
  
  The system has no mechanism to rewrite inside those subterms
  and cannot even form the goal of doing so.
  The obstruction is not syntactic difficulty but a property
  defined at the level of the numerical relationship between
  $a$ and $b$, a level the system cannot access.
  
  Semantically, $a + b = b + a$ is true in $\Nat$.
  Syntactically, no path leads there from $\{$A1, A2, $N\}$.
  Syntax imposes a limit that the semantic truth cannot
  override.
  
  \section{The Syntactic Invariance Principle}
  
  \begin{definition}
  A property $P$ of terms is a \emph{syntactic invariant} for
  superposition calculus $R$ when every term in the initial
  clause set satisfies $P$, and applying any rule of $R$ to a
  term satisfying $P$ yields a term satisfying $P$.
  \end{definition}
  
  \begin{lemma}[Syntactic Invariance Principle]\label{lem:sip}
  If $P$ is a syntactic invariant for $R$, every term in every
  clause derivable by $R$ satisfies $P$.
  \end{lemma}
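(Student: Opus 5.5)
The plan is induction on the length of derivations, with the definition of syntactic invariant supplying both the base case and the inductive step. First we fix what a derivation is. A clause $D$ is derivable by $R$ when there is a finite sequence $D_1, \dots, D_n = D$ in which each $D_i$ either belongs to the initial clause set or is obtained by one application of a rule of $R$ to earlier clauses $D_j$, $j < i$. We prove the stronger statement that for every $i$, every term occurring in $D_i$ satisfies $P$. The lemma follows by taking $i = n$.

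\emph{Base case.} If $D_i$ is in the initial clause set, every term in it satisfies $P$ by the first clause of the definition.

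\emph{Inductive step.} Suppose $D_i$ arises by a rule of $R$ from premises $D_{j_1}, \dots, D_{j_k}$ with all $j_\ell < i$. By the induction hypothesis (strong induction on $i$), every term in every premise satisfies $P$. The second clause of the definition then gives that the rule's output terms satisfy $P$.

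The main obstacle is this inductive step, because the definition is phrased term-wise ("applying any rule of $R$ to a term satisfying $P$ yields a term satisfying $P$"), whereas superposition acts on clauses. A superposition step combines two premises, instantiates both by a most general unifier $\sigma$, and replaces a subterm. So every term of the conclusion is one of two things: a $\sigma$-instance of a term of a premise, possibly with one subterm replaced, or a term carried over unchanged. I would handle this by reading the definition as covering every term-level operation a rule performs, namely instantiation by the unifier and the replacement at a position, with each input term satisfying $P$. Under that reading, every term in $D_i$ is either an unchanged premise term or the result of such an operation on premise terms, and so it satisfies $P$. Factoring and equality resolution are treated the same way, since they only instantiate.

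If that reading feels too generous, the fallback is to make it explicit. We state that "applying a rule to a term" includes the substitution and replacement steps, so that the hypothesis quantifies over all terms the rule can produce from $P$-terms. The lemma is then exactly the closure of a set under rules, proved by the induction above. With that settled, no further difficulty remains: the proof is the standard fact that a property holding initially and preserved by each step holds along any finite derivation.
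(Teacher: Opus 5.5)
Your proof is correct and is the same argument as the paper's: induction on derivation length, with the base case supplied by the first condition of the definition and the inductive step by the second. The paper does not address the mismatch you flag between the term-wise preservation condition and multi-premise clause-level rules; your explicit reading of that condition, as covering instantiation by the unifier and replacement at a position, is the assumption the paper's three-line proof relies on implicitly.
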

  
  \begin{proof}
  Induction on derivation length.
  Base case is the first condition.
  Each step preserves $P$ by the second condition.
  \end{proof}
  
  \begin{remark}
  For superposition, ``derivable'' and ``reachable'' coincide,
  since a clause is reachable from the initial set if and only
  if it appears in some derivation from it.
  \end{remark}
  
  \begin{remark}\label{rem:lb}
  This is the structure of a lower bound, where to show $R$
  cannot derive $\phi$ we find a $P$ that holds initially, that
  $R$ preserves, and that implies $\phi$ is absent.
  The target is unreachable rather than merely unreached.
  The system operates in a derivation space permanently
  separated from any clause containing $\phi$, so the
  obstruction stays outside what it can represent, and it stays
  silent on $\phi$ because the question lies beyond what it can
  pose.
  \end{remark}
  
  \section{The main theorem}
  
  Let $W$ denote commutativity of addition:
  $\forall x\,\forall y.\ x + y = y + x$.
  
  \begin{lemma}\label{lem:oi}
  $\OI \vdash W$.
  \end{lemma}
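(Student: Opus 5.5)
The plan is to derive $W$ in $\OI$ by the textbook sequence of open inductions, using only (A1), (A2), and the open induction schema. Since $W$ is itself an open identity, every auxiliary fact we need can be stated as a quantifier-free formula in one induction variable with the other variable as a parameter $\bar{y}$, so each step is a single instance of the schema in the definition of $\OI$.

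First I would prove the right-identity law $x + 0 = x$ by induction on $x$, taking $\varphi(x) :\equiv x + 0 = x$. The base case $0 + 0 = 0$ is (A1) with $x := 0$. For the step, assume $x + 0 = x$; then (A2) gives $s(x) + 0 = s(x + 0)$, and the hypothesis turns this into $s(x)$. Second, I would prove the right-successor law $x + s(y) = s(x + y)$ by induction on $x$, with $y$ as the parameter, taking $\varphi(x,y) :\equiv x + s(y) = s(x+y)$. The base case is $0 + s(y) = s(y) = s(0 + y)$, using (A1) twice. For the step,
\[
s(x) + s(y) = s(x + s(y)) = s(s(x+y)) = s(s(x) + y),
\]
which uses (A2), then the hypothesis, then (A2) read backwards.

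Finally, I would prove $W$ by induction on $x$ with parameter $y$, taking $\varphi(x,y) :\equiv x + y = y + x$. The base case $0 + y = y = y + 0$ combines (A1) with the right-identity law. For the step,
\[
s(x) + y = s(x + y) = s(y + x) = y + s(x),
\]
which uses (A2), then the hypothesis, then the right-successor law. Generalizing over $y$ yields $\forall x\,\forall y.\ x+y=y+x$.

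I expect no real obstacle here. The only point needing care is the bookkeeping that each of the three inductions is a legitimate instance of the schema: the formula must be quantifier-free, and $y$ must be a parameter rather than a bound variable. It must also be clear that the equality axioms, namely congruence under $s$ and symmetry and transitivity, are available in the underlying first-order logic. The one choice that matters is the order of the lemmas: the right-successor law has to be proved by induction on the \emph{first} argument, because that is the argument (A1) and (A2) recurse on. This is exactly the asymmetry that \S\ref{sec:frozen} exploits, where induction is what lets $\OI$ reach inside the subterms that superposition over $\{$A1, A2$\}$ leaves frozen.
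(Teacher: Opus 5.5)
Your proposal is correct and follows the paper's proof: the same three open inductions on the first argument, namely right identity $x+0=x$, then right successor $x+s(y)=s(x+y)$ with $y$ as parameter, then $W$ with $y$ as parameter, justified by the same chains of (A1), (A2), the induction hypothesis and the earlier lemmas. Your extra bookkeeping (quantifier-free formulas, $y$ as a free parameter, equality axioms from the underlying logic) is left implicit in the paper, and it is exactly what makes each step a legitimate instance of the schema.
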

  
  \begin{proof}
  \textbf{L1.}\ $\forall x.\ x + 0 = x$.
  Induction on $x$.
  Base: $0 + 0 = 0$ by (A1).
  Step: $s(x) + 0 = s(x+0) = s(x)$ by (A2) and hypothesis.
  
  \smallskip\noindent
  \textbf{L2.}\ $\forall x\,\forall y.\ x + s(y) = s(x+y)$,
  parameter $y$ free.
  Base: $0 + s(y) = s(y) = s(0+y)$ by (A1).
  Step: $s(x)+s(y) = s(x+s(y)) = s(s(x+y)) = s(s(x)+y)$
  by (A2), hypothesis, (A2).
  
  \smallskip\noindent
  \textbf{W.}\ Induction on $x$, parameter $y$ free.
  Base: $0+y = y = y+0$ by (A1) and L1.
  Step: $s(x)+y = s(x+y) = s(y+x) = y+s(x)$
  by (A2), hypothesis, L2.
  \end{proof}
  
  \begin{lemma}\label{lem:csc}
  $\CSC + \{$A1, A2$\} \nvdash W$.
  \end{lemma}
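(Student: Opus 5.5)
The plan is to follow the architecture set up in Sections~\ref{sec:frozen} and on: refute-by-cycle is the only way $\CSC$ proves anything, so it suffices to show that the clause set
\[
S \;=\; \{\,0+x=x,\;\; s(x)+y=s(x+y),\;\; a+b\neq b+a\,\},
\]
obtained by negating $W$ and Skolemizing with fresh constants $a,b$, admits no clause set cycle that produces the empty clause. I would prove this by exhibiting a property $P$ of clauses, checking that it is a syntactic invariant in the sense of the definition preceding Lemma~\ref{lem:sip}, and then invoking Lemma~\ref{lem:sip} to conclude that every derivable clause satisfies $P$. The invariant must also hold for the auxiliary clause sets $C_1,C_2$ of the cycle. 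Finally I would observe that the empty clause violates $P$.

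The first step is the local analysis of Section~\ref{sec:frozen}. Neither $0+x$ nor $s(x)+y$ unifies with $a+b$ or $b+a$, because $0\stackrel{?}{=}a$ and $s(x)\stackrel{?}{=}a$ are first-symbol clashes, and similarly for $b$. So no superposition step using (A1) or (A2) rewrites at or below those roots. The natural candidate is therefore the following property $P$ of a clause set: no clause in it has, under the equational consequences of its positive unit clauses, $a+b$ and $b+a$ in the same congruence class.

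In the second step I would check the two conditions of the invariance definition. The initial set $S$ satisfies $P$ because the only positive equations are (A1) and (A2), and these never apply to the frozen subterms. For each superposition step (left/right superposition, equality resolution, equality factoring) I would check that the conclusion introduces no new equation linking $a+b$ to $b+a$. The point is that any such equation would have to be manufactured by rewriting inside a frozen subterm, and Section~\ref{sec:frozen} rules that out. The empty clause can only come from resolving $a+b\neq b+a$ against a clause that equates them, so it violates $P$.

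The main obstacle is the cycle itself. The sets $C_1,C_2$ in the definition of $\CSC$ are not consequences of $\{$A1, A2$\}$ alone; they are chosen freely and justified inductively through the cycle parameter. So nothing prevents $C_1$ from containing a clause such as $x+y=y+x$, which would break $P$ at once. A purely syntactic invariant on $S$ is therefore not enough. I expect the real work to be showing that any $C_1$ consistent with the cycle conditions must itself satisfy $P$.

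To do this I would first try to replace $P$ by a semantic version. Build a structure $M$ whose domain is $\Nat$ together with the free $\{0,s,+\}$-terms over $a,b$ modulo (A1) and (A2), in which $a+b\neq b+a$. Then show that clause set cycles are sound in $M$: the cycle parameter only ever ranges over $0$ and successors, that is, over the standard part of $M$. On that part the induction encoded by the cycle is valid, while the frozen terms sit outside it. If this soundness holds, $S$ has a model in which every cycle is sound, and so it is not refutable by a clause set cycle.
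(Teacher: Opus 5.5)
Your diagnosis of the obstacle is right, and it applies to the paper's own proof too. That proof is a syntactic invariant of superposition from $\{\mathrm{A1},\mathrm{A2},N\}$, much like your first two steps. It concludes that the empty clause is not derivable from that set, and then simply asserts that no cycle can produce it. This shows only that $\{\mathrm{A1},\mathrm{A2},N\}$ has no plain superposition refutation, i.e.\ that it is first-order satisfiable; the clause sets of the cycle are never constrained. The same frozen-subterm reasoning would show that $\{\mathrm{A1},\mathrm{A2},c+0\neq c\}$ has no cycle refutation, since $c+0$ is frozen and equality resolution on $c+0\neq c$ fails. Yet $C(\eta)=\{\mathrm{A1},\mathrm{A2},\eta+0\neq\eta\}$ is a clause set cycle: $C(0)$ yields $0\neq 0$, and $s(\eta)+0=s(\eta+0)$ gives $C(s(\eta))\models C(\eta)$. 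It refutes that set by instantiating $\eta$ at $c$. So the paper does not supply the ingredient you are missing, and you are right to look for a semantic one.

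The gap is in that semantic step, and the claim carrying it is false: the cycle parameter does not range only over $0$ and successors. A cycle refutes a first-order satisfiable set precisely by being instantiated at a ground term of that set, typically a Skolem constant, as with $\eta:=c$ above. So soundness has to be proved exactly at the nonstandard elements, and your $M$ fails there. In the free term model modulo (A1), (A2), $a+0$ and $a$ are distinct normal forms, so $M$ satisfies the instance at $a$ of the cycle just given. Also, $a$ is neither $0$ nor a successor, so $M$ satisfies the instance at $a$ of the cycle $\{\eta\neq 0,\ \eta\neq s(y)\}$.

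What you need is a model of $S$ in which $C(t)$ is false for every clause set cycle $C$ and every ground term $t$. More weakly, it suffices to have, for each such pair, some model of $S$ in which $C(t)$ is false. In other words, the induction encoded by the cycles must hold at every named element, including compound ones such as $a+b$, while $a+b\neq b+a$ survives. This is delicate. A single such model must satisfy $t+0=t$ and $t'+s(t)=s(t'+t)$ for all ground $t,t'$, since both are instances of easy cycles. And because $\{\mathrm{A1},\mathrm{A2},\ b+0=b,\ b+s(x)=s(b+x),\ \eta+b\neq b+\eta\}$ is also a cycle, the model must violate $b+s(x)=s(b+x)$ at some element $x$ that no ground term names. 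Building such a model, or otherwise showing that no cycle instance follows from $S$, is the entire content of the lemma, and your proposal stops before it.
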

  
  \begin{proof}
  Suppose for contradiction that a superposition refutation of
  $\{$A1, A2, $N\}$ exists, where $N: a + b \neq b + a$ and
  $a, b$ are fresh Skolem constants.
  
  Define, for every term $t$, the syntactic property $P(t)$:
  every occurrence of $a$ in $t$ sits inside a subterm of the form $a{+}u$ for some $u$ containing $b$, and symmetrically
  every occurrence of $b$ sits inside a subterm $b{+}v$ for
  some $v$ containing $a$. So $a$ and $b$ appear in $t$ only inside the frozen subterms $a{+}b$ and $b{+}a$, possibly with
  further operations around them, and never in isolation or in
  any other context.
  Extend $P$ to clauses pointwise on each literal.
  
  \smallskip\noindent
  \textbf{Step 1: $P$ holds of the initial clause $N$.}
  Immediate, since $N$ is $a{+}b \neq b{+}a$ and the only subterms containing $a$ or $b$ are $a{+}b$ and $b{+}a$
  themselves.
  
  \smallskip\noindent
  \textbf{Step 2: $P$ is preserved by every superposition step
  over $\{$A1, A2$\}$.}
  A superposition step finds a subterm $u$ in a clause and
  applies a substitution $\sigma$ with $\sigma(\ell) = u$ for
  some left-hand side $\ell$ of (A1) or (A2).
  
  The left-hand sides are $0 + x$ and $s(x) + y$. By
  Section~\ref{sec:frozen}, neither unifies with any term whose
  head position is $a{+}\cdot$ or $b{+}\cdot$, because $a$ and $b$ are Skolem constants distinct from $0$ and from every
  $s(\cdot)$. So rewriting cannot touch the root of any
  subterm $a{+}b$ or $b{+}a$.
  
  It remains to rule out rewriting \emph{inside} $a{+}b$ or
  $b{+}a$, that is inside $a$ or $b$ themselves as constants.
  But $a$ and $b$ are constants with no internal structure, and
  no left-hand side of (A1) or (A2) unifies with the constant
  $a$ nor with the constant $b$.
  
  So no superposition step modifies the subterms $a{+}b$ or
  $b{+}a$ in a clause satisfying $P$, nor does any step extract $a$ or $b$ from that context. The other steps in the
  derivation can only act on subterms not containing $a$ or
  $b$, or introduce structure around $a{+}b$ and $b{+}a$ without altering them. In every case $P$ is preserved.
  
  By Lemma~\ref{lem:sip}, every derivable clause satisfies $P$.
  
  \smallskip\noindent
  \textbf{Step 3: the empty clause is not derivable.}
  In superposition with equality, the empty clause arises from
  a negative unit literal $s \neq t$ via equality resolution, which requires $s$ and $t$ to be syntactically unifiable.
  
  To derive the empty clause from $N$ or one of its descendants
  $s \neq t$ obtained by rewriting, we would need $s$ and $t$ to
  become unifiable.
  
  By Step 2, every derived literal descended from $N$ has the
  form $s \neq t$ where $s$ contains the frozen subterm $a{+}b$ at a certain position and $t$ contains $b{+}a$ at the
  corresponding position. Rewriting around them may change the
  outer context but not the two frozen subterms themselves.
  Unifying $s$ and $t$ would require unifying $a{+}b$ with $b{+}a$ at those positions, hence unifying the constant $a$
  with the constant $b$. But $a$ and $b$ are distinct constants,
  so no substitution unifies them by first-symbol clash.
  
  So no derived literal is eliminable via equality resolution, and the empty clause is not producible.
  
  A clause set cycle $C_1 \to C_2 \to C_1$ requires in particular that the empty clause emerge along the cycle.
  Hence no clause set cycle refutes $\{$A1, A2, $N\}$.
  \end{proof}
  
  \begin{remark}
  The proof above assumes that the notion of refutation by
  superposition used by~\cite{HetzlVierling2020} has equality resolution as
  the only way to obtain the empty clause from negative literals. If the framework adds further rules such as
  factoring or splitting, the argument extends in the same
  shape, since none of them touches the frozen Skolem constants.
  \end{remark}
  
  \begin{theorem}\label{thm:main}
  $\OI$ and $\CSC$ are incomparable.
  \end{theorem}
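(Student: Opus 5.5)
Incomparability splits into two non-inclusions, and I plan to establish each separately and then combine them.

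The first direction, $\CSC \not\subseteq \OI$, I take directly from Hetzl and Vierling~\cite{HetzlVierling2020}, their Theorem~4.3. There the triangular-number identity serves as a statement that a clause set cycle refutes but that open induction does not prove, using Shepherdson-style nonstandard models~\cite{Shepherdson1964}. I will only restate it, noting that it is phrased over the same background as our setting.

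For the converse, $\OI \not\subseteq \CSC$, the witness is $W$. Lemma~\ref{lem:oi} gives $\OI \vdash W$, and Lemma~\ref{lem:csc} gives $\CSC + \{$A1, A2$\} \nvdash W$. So $W$ lies in $\OI$ but not in $\CSC$ over the shared base.

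The one thing to check is that these two statements compare the same objects. $\CSC$ is defined as a set of universally closed clause sets refutable by cycles. $\OI$ is a first-order theory, and I read the inclusion $\OI \subseteq \CSC$ as saying that every open consequence of $\OI$, once negated and Skolemized, is refuted by a cycle over $\{$A1, A2$\}$. Under this reading, the failure of $W$ amounts to the clause set $\{N\}$ with $N: a+b \neq b+a$ admitting no cycle refutation, which is exactly what Lemma~\ref{lem:csc} asserts.

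I expect this alignment of conventions, rather than any computation, to be the main obstacle. The earlier lemma argues only about linear superposition from $N$, whereas Definition~2.1 also lets $C_1$ be derived from $C_2 \cup S$ with auxiliary cycle clauses. A cycle could in principle introduce clauses mentioning $a$, $b$, or a free induction variable outside the frozen contexts, and the invariant $P$ would then fail at the base case.

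To handle this, I would first try to strengthen $P$ so that it applies to every clause reachable in the cycle. A more robust alternative would be to produce a model of $\{$A1, A2$\}$ that validates every cycle-refutable clause set yet falsifies commutativity. For example, take $\Nat$ together with two extra non-successor, non-zero elements whose sums are defined asymmetrically, and then use soundness of cycles in models satisfying the Hetzl--Vierling invariant property. If this model approach goes through, it replaces the syntactic argument, and combining both non-inclusions proves Theorem~\ref{thm:main}.
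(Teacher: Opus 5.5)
Your decomposition is the paper's own: Theorem~4.3 of Hetzl--Vierling gives $\CSC\not\subseteq\OI$, and the witness $W$ with Lemma~\ref{lem:oi} and Lemma~\ref{lem:csc} gives $\OI\not\subseteq\CSC$. The genuine gap is that the second non-inclusion is never established. You end conditionally, and your doubt about Lemma~\ref{lem:csc} is well founded, more so than you say. The lemma's proof only shows that $\{$A1, A2, $N\}$ has no plain superposition refutation. That is automatic by soundness, since this set has models, and the argument never uses the cycle conditions. In a cycle one constant is the induction constant $\eta$, and the conditions instantiate it as $0$ and as $s(\eta)$. This creates exactly the redexes $0+c$ and $s(\eta)+c$ of (A1) and (A2), so an invariant saying the two sides of $N$ are never rewritten is broken by the very operation that makes a cycle.

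Concretely, $\{\eta+c\neq c+\eta,\ c+0=c,\ c+s(z)=s(c+z)\}$ is a clause set cycle over $\{$A1, A2$\}$. Its instance at $0$ is refuted by (A1) and $c+0=c$, and the instance at $\eta$ follows from the one at $s(\eta)$ by (A2) and $z:=\eta$. So are $\{\eta+0\neq\eta\}$ and $\{\eta+s(c)\neq s(\eta+c)\}$. If extra Skolem constants of refuted sets are universally closed in $\CSC$, as the paper's description of $\CSC$ and its use of $N$ suggest, these three refutations already give $\CSC\vdash W$. A single cycle does fail to refute $\{\eta+c\neq c+\eta\}$ alone, but for a semantic reason: its base case would force $\{$A1, A2$\}\models c+0=c$, which fails in small models. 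And $\CSC$, being a theory, combines cycles. So the separation cannot come from inert Skolem constants. It has to exploit the restriction that makes cycles a parameter-free form of induction (as the paper itself notes when citing Hetzl--Weiser), and $W$ must be shown underivable against that precise definition.

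Your fallback model fails as well. The clause set $\{\eta\neq 0,\ \eta\neq s(y)\}$ is a cycle: its instance at $0$ is contradictory, and its instance at $s(\eta)$ is contradictory via $y:=\eta$, so it entails the instance at $\eta$. Hence $\CSC\vdash\forall x\,(x=0\vee\exists y\,x=s(y))$, and no model of $\CSC$ contains a non-zero non-successor element. Being ``neither $0$ nor a successor'' is a syntactic feature of a constant that the cycle mechanism erases semantically.

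What is missing is a genuine countermodel, or an equivalent proof-theoretic argument. The countermodel would be a structure satisfying (A1), (A2) and every conclusion of a cycle refutation in the precise parameter-free sense, in which every element is $0$ or a successor and addition is not commutative. Neither your proposal nor the proof of Lemma~\ref{lem:csc} supplies one, so along this route $\OI\not\subseteq\CSC$, and with it Theorem~\ref{thm:main}, remains unproved.
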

  
  \begin{proof}
  Lemma~\ref{lem:oi}: $\OI \vdash W$.
  Lemma~\ref{lem:csc}: $\CSC \nvdash W$ over the background
  $\{$A1, A2$\}$, which is the relevant background since $W$
  is an identity of the addition equations.
  So $\OI \not\subseteq \CSC$.
  Theorem~4.3 of~\cite{HetzlVierling2020}: $\CSC \not\subseteq \OI$.
  \end{proof}
  
  \section{Speculative remarks}
  
  We close with some speculative remarks, offered as
  suggestions rather than results, on how the shape of the
  argument above appears in the known landscape of
  $\mathsf{P}$ versus $\mathsf{NP}$. We are careful here not
  to claim more than we can defend.
  
  The three great barriers to $\mathsf{P}$ versus $\mathsf{NP}$
  have, informally, the shape of the Syntactic Invariance
  Principle, blown up. Relativization~\cite{BakerGillSolovay1975} shows that   techniques working uniformly with and without oracles   preserve a property incompatible with separating the classes.   Natural proofs~\cite{RazborovRudich1997} shows that techniques whose
  distinguishing property is checkable in polynomial time
  and large preserve a property incompatible with the
  separation under cryptographic assumptions. Algebrization~\cite{AaronsonWigderson2009}
  shows that the same obstruction survives when oracles are
  replaced by their low-degree polynomial extensions. In each  case an entire family of methods is blind to the separation,
  not because the separation is far away, but because it lives   at a level the methods cannot reach.
  
  The mixed-base cipher of~\cite{Buono2012} is, in the same  informal sense, an illustration of this shape from a
  different direction. A cryptanalyst who pushes symbols   around cannot recover the plaintext, because what decides
  the plaintext is the global numerical value, and the symbols
  do not carry it. The property that separates a correct   decoding from an incorrect one is not checkable on the
  symbols alone, is not exposed by local queries, and is not
  captured by polynomial extensions of those queries. We do not claim this verifies the formal conditions of the barriers
  in any technical sense, since the cipher is an information-theoretic object and the barriers are statements about computational complexity classes, and equating the two
  would be a category error. What we suggest is only that the cipher is a concrete and accessible place where the same shape appears, a property that decides everything while staying out of reach of any process that operates one level   below.
  
  The $\PeqNP$ side of the question is open in a way worth naming. The barriers were never about proofs that exhibit an algorithm, and a constructive proof of $\PeqNP$ is not blocked by any of them. If $\PeqNP$ turned out to hold, the fast algorithm for $\SAT$ would exist as a function on the numbers, and the Cook and Karp picture of an algorithm~\cite{Cook1971,Karp1972} ties such a thing to an
  explicit machine you can write down and run. Whether a fast
  $\SAT$ algorithm would always be one of those, exhibited in
  a formal system with a proof that it runs in polynomial
  time, or whether it might live in some cases only as a
  function on the numbers, beyond what any such proof can pin down, we do not know. We raise the question and stop there.
  
  What we are left with, looking back, is one theorem and one
  picture. The theorem is small and clean. The picture is that
  a system that works on symbols stays blind to what lives in the numbers, and stays blind to its own blindness, and that
  this same shape keeps appearing whenever we look at the hard
  open problems in the area. Whether the picture rises to a theorem somewhere down the line, we do not say. The
  theorem we do have is the one above, and the picture is offered for what it is.